\newtheorem{theorem}{Theorem}[section]
\newtheorem{proposition}[theorem]{Proposition}
\newtheorem{definition}[theorem]{Definition}
\newcommand{\E}{{\rm E}}
\newcommand{\median}{{\rm median}}
\newcommand{\mode}{{\rm mode}}
\newcommand{\obs}{{\rm obs}}
\newcommand{\mis}{{\rm mis}}
\begin{document} 
\title{Causal inference for ordinal outcomes\protect\thanks{Alexander M. Volfovsky is an NSF Postdoctoral Fellow in the Department of Statistics at Harvard University (volfovsky@fas.harvard.edu). Edoardo M.~Airoldi is an Associate Professor of Statistics at Harvard (airoldi@fas.harvard.edu). Donald B.~Rubin is the John L. Loeb Professor of Statistics at Harvard (rubin@stat.harvard.edu).}} 
\author{Alexander Volfovsky}
\author{Edoardo M. Airoldi}
\author{Donald B. Rubin}
\affil{Department of Statistics, Harvard University}


\maketitle

\newpage
\begin{abstract}
Many outcomes of interest in the social and health sciences, as well as in modern applications in computational social science and experimentation on social media platforms, are ordinal and do not have a meaningful scale. Causal analyses that leverage this type of data, termed ordinal non-numeric, require careful treatment, as much of the classical potential outcomes literature is concerned with estimation and hypothesis testing for outcomes whose relative magnitudes are well defined. Here, we propose a class of finite population causal estimands that depend on conditional distributions of the potential outcomes, and provide an interpretable summary of causal effects when no scale is available. We formulate a relaxation of the Fisherian sharp null hypothesis of constant effect that accommodates the scale-free nature of ordinal non-numeric data. We develop a Bayesian procedure to estimate the proposed causal estimands that leverages the rank likelihood. We illustrate these methods  with an application to educational outcomes in the General Social Survey.
\newline
\vfill

\textbf{Keywords:} Ordinal non-numeric data, potential outcomes, Fisher's exact test, estimation of causal effects, rank likelihood. 

\end{abstract}

\newpage
\tableofcontents

\newpage
\section{Introduction}
Outcomes in the social and economic sciences are frequently ordered, however, it is not
always the case that the scale or magnitude of the outcomes is available.
That is, although outcome $Y_i$ might belong to category one, labeled ``low'', and outcome
$Y_{i^\prime}$ belongs to category two, labeled ``high'', the only information about the relative 
relationship of the two outcomes is that $Y_i<Y_{i^\prime}$. 
Data that exhibit such a structure are termed 
{\it ordinal non-numeric}, and although the categories are frequently represented by 
integer values, there is no substantive information in the
data about the relative magnitude of the categories. Examples of
such data abound in education research, e.g., the 
level of education: ``high school diploma'', ``college'',``masters'',
``PhD'' \citep{dubow2009long}, 
in operations research, in employment records, 
e.g., for job seniority data: ``staff'', 
``manager'', ``vice president'', ``president'' \citep{singh1990jobs}, and in healthcare
research, e.g., when the outcome is pain: ``none'', ``mild'', ``severe'' \citep{collins1997visual}. 
In this context, it is frequently of interest to
make causal statements about how some treatment might affect
an individual's category, e.g., whether a drug reduces pain level
or whether a vocational program leads to better job prospects \citep{mealli2012statistical}. 

The first step when attempting any causal analysis is the 
choice of an estimand, or inferential target, which is the object of interest. When outcomes are 
continuous, the most commonly studied estimand is the average treatment
effect \citep[e.g., see][]{rubin1974estimating,holland1986statistics}. However, this quantity is not well defined for non-numeric
data as the notion of an average of more than two ordinal values 
is not well defined. Other measures
of centrality might be of interest; for example, the modal difference
between treatment and control describes the most common number
of categories that are changed due to treatment. 
However, measures
of centrality for  non-numeric data do not provide a complete picture
because the relative magnitude between pairs of categories is not well defined.
For example, a modal treatment effect of zero, indicating that
most often there is no change due to treatment, might conceal 
information that the treatment is effective for the subsets of individuals whose 
potential outcome equals category one, under control, but is ineffective for anyone  else.
These issues 
motivate our development of a class of conditional estimands in 
Section \ref{sec:estimand}.

The causal inference literature based on the potential outcomes 
framework \citep{rubin1974estimating} has focused on a special case
of ordinal non-numeric response: the binary outcome. This is a
special case because although the categories are ordered and 
non-numeric (for example in medicine: $0=$~dead, $1=$~alive), there
are no relative magnitudes to consider \citep{rosenbaum1983assessing},
that is, the average of binary outcomes is simply the proportion. 
The presence of more outcome levels has frequently led to
model based causal analysis \citep{rubin1978bayesian} without first defining estimands
based on the potential outcomes \citep{shaikh2005threshold,cunha2007identification}. 
A tempting advantage of model-based 
inference for ordinal non-numeric outcomes is the availability
of a continuous latent variable formulation of the outcomes.
Within the context of the potential outcomes framework,
a model-based approach assumes the existence of a 
continuous potential outcome $Z_i(t)$ and a mapping
$g:Z_i(t)\rightarrow Y_i(t)$ that discretizes $Z_i(t)$ into an ordinal non-numeric observation, $Y_i(t)$. We refer to the former as potential outcomes on the latent scale, or as latent potential outcomes, to distinguish them from the actual potential outcomes, which are observed on the measurement scale. 
It is possible to define estimands
for either of these types of potential outcomes, 
but we show that the latent variable formulation suffers
from undesirable identifiability issues
in Section \ref{sec:estimand}. 

The remainder of this paper is structured as follows. In 
Section \ref{sec:estimand} we formally revisit the potential
outcomes framework and describe various interesting
estimands. In Section \ref{sec:test} we
revisit the framework for the Fisher exact test for  hypothesis sharp null
of no causal effect. Unlike in the continuous case, the sharp null
of constant (non zero) effect is not available for ordinal non-numeric
data and so we develop a novel test for the related null hypothesis of
an effect that changes at most one category.
Section \ref{sec:estimation} describes estimation
procedures for  estimands on the observed scale with and without additional 
assumption on the joint distribution of the potential outcomes. 
In Section \ref{sec:bayes} we outline two Bayesian procedures 
for estimating causal effects on the observed scale, one within the modeling framework
of a standard ordered probit and one based on the rank 
likelihood. These methods are illustrated through a practical example based on data from the General Social
Survey, in Section~\ref{dataexamp}. Some concluding remarks follow. 


\section{Potential outcomes framework and estimands}
\label{sec:estimand}
In this section we provide a general introduction to the
potential outcomes framework for causal inference, frequently
referred to as the Rubin Causal Model \citep{rubin1974estimating,rubin1978bayesian}. 
A detailed history and exposition is available in \cite{rubin2005causal}. 

The formal potential outcomes framework provides a 
clear separation between the science, i.e., the object of inference, and the
process by which inference about the science is made. Brief definitions
of specific concepts are as follows. The unit of inference is a
physical object at a particular point in time, e.g., an individual. For a binary treatment
$T\in \{0,1\}$ and $N$ units, a table that codifies the {science} 
under the Stable-Unit-Treatment-Value Assumption (SUTVA, \citet{rubin1980randomization})
is an
$N\times 2$ table where each row equals $(Y_i(0),Y_i(1))$, where $Y_i(0)$ is the potential
outcome for unit $i$ under control and $Y_i(1)$ is the potential outcome for unit
$i$ under treatment. A unit level causal effect is a comparison of the potential
outcomes for a given unit. However, we
can only observe one of the potential outcomes for each unit---that is,
the treatment status of unit $i$ is either $W_i=0$ or $W_i=1$. 
As such, unit specific causal effects cannot be observed and must
be inferred. This is facilitated by different assumptions
that can be imposed on the marginal and joint distributions of the treatment
assignments $W_i$ and the science table. 

One of the most commonly studied causal estimands, both in terms of 
theory and applications \citep{rubin1974estimating,lin2013agnostic}
 is the average causal effect $\E[\bar Y(1)- \bar Y(0)]$.
Under relatively mild conditions \citep[e.g., see][]{imbens2014causal}
 this quantity can be estimated in both
randomized and observational studies. A fundamental property of 
this estimand is that its dependence on unit level causal effects is 
separable into a dependence on only the potential
outcomes under treatment and only the potential outcomes under control
 due to the linearity of the expectation. As we will see below, many estimands
 that are of interest when dealing with ordinal non-numeric outcomes do not 
 possess this property. We argue that this particular type of average causal effect is not
 meaningful for ordinal non-numeric outcomes. 

For ordinal non-numeric data,
one can 
consider estimands on the observed scale or,
 if assuming 
the existence of a latent variable representation of the 
science table,
on a latent scale. Recall that on the latent scale
we make the assumption that there are latent potential outcomes
$(Z_i(0),Z_i(1))$ that are related to the observed scale via a deterministic function
$g:Z_i(t)\rightarrow Y_i(t),\ t=0,1$. The mapping is dependent on the ordering on the observed scale
in the following sense; if there are categories that
admit different orderings based on different
criteria then any inference must be conditional on the choice
of a specific criterion (e.g. it takes longer to become an MD than it does
to become a lawyer but the starting salary of lawyers 
is higher than that of doctors). 
This suggests that the function $g$ must be conditioned
on the ordering criterion, on the observed scale, to maintain
the appropriate ordering of potential outcomes
on the latent scale. An additional complication
for causal inference using the latent potential outcomes
is that the function $g$ is rarely known explicitly,
making many estimands on the 
latent scale difficult to interpret. 
In what follows
we first discuss estimands of interest on the
observed scale, arguing for the use of conditional
estimands to best capture the effect of treatment in
ordinal non-numeric data. We then describe estimands
on the latent  
scale and note that due to the complications
associated with the latent scale, the
appropriate estimands for most applications
are on the observed scale.

Throughout,
the treatment is binary, there are $k$ outcome levels on the observed scale, 
and the latent scale is continuous and unbounded on the real line.

\subsection{Estimands on the observed scale}\label{rawscale}
As with the general potential outcomes framework, the 
complete information about any causal effect is found in the
joint distribution of the potential outcomes;
here we ignore the possible presence of covariates. 
For the observed scale
this is summarized by a $k\times k$ matrix $P$ where $p_{ij}=\Pr(Y(0)=i,Y(1)=j)$.
All estimands are functions of this matrix $P$. For example,
the pair of marginal distributions, $(P_0, P_1) \equiv (\Pr(Y(0), \Pr(Y(1))) = (P\mathbf{1},\mathbf{1}^tP)$,
is a $2k$ dimensional summary of the joint matrix $P$. If the potential
outcomes are independent, then these marginals encode all the information
in the joint distribution. The high dimensional nature of these estimands
 reduces their simplicity and so we strive
for lower dimensional summaries that are more easily communicated. 

\paragraph{One dimensional estimands.}
The average treatment effect (ATE) mentioned above, a popular scalar estimand, is not meaningful
in the setting of ordinal non-numeric outcomes because the expectation
is not well defined. Other measures of centrality suffer from a similar
degeneracy as they effectively assign a scale to the differences. For example, for a pair of units $i$ and $j$ it might happen that 
$Y_i(1)-Y_i(0)=4-3=Y_j(1)-Y_j(0) =2-1$. Thus both of these differences would contribute the same information
to any function $f$ that only depends on the difference. Similarly
the difference in the medians under treatment and under control,
$\median[Y(1)]-\median[Y(0)]$ obscures the scaling issue.
As such, the only one dimensional summary that is meaningful when the
scale of the observations is not identified, is one that summarizes the difference between the 
marginal distributions
of the potential outcomes.
Formally, let $d(\cdot,\cdot)$ be a metric
on the space of probabilities and define the estimand $d(P_0,P_1)$. This estimand
is especially important when considering the sharp null hypothesis
of no effect, as we show in Section \ref{sec:test}.
%

\paragraph{Multidimensional estimands.}
As one dimensional summaries are often insufficient for providing
adequate information about any causal effects between
treatment and control groups, multidimensional estimands
are required. A two dimensional estimand $(f_0[Y(0)],f_1[Y(1)])$, for instance,
where the functions $f_0,f_1$ are independent of the scale
of the potential outcomes, provides information that might have been concealed
by estimands that considered the difference $f_1(\cdot)-f_0(\cdot)$.
As argued above, while differences provide overarching information about outcomes
under control and treatment, they do not provide any information
on the amount of category change due to treatment. 
Another two-dimensional estimand that provides a compact summary
of this
information is the most likely pair of potential outcomes,
${\rm ml}=\arg\max_{i,j}p_{ij}$. In general, this is a function of $P$,
but in cases of independence between the potential outcomes
it becomes a low dimensional summary of $P_0$ and $P_1$. 

In practice, we are interested in the effects of
mechanisms by which treatment changes the potential outcome
under control. As such, it is natural to consider estimands
that condition on the level of the potential outcome
under control. Here, we advocate for the use
of a class of causal estimands
that involves the conditional probabilities
of the potential outcomes.
That is, we might be interested in the $k$-dimensional summaries
$M_{1i}=\median[Y(1)|Y(0)=i]$ or $M_{2i}=\mode[Y(1)|Y(0)=i]$. 
These are more
detailed versions of the two dimensional estimands described above. 
It is important to note
that any function of the conditional distribution of the 
potential outcomes can be used as a multidimensional estimand.
Conditional estimands provide a way of measuring the 
magnitude of the effect relative to treatment when no numeric scale is available. 
Since these estimands depend on the joint distribution of the 
potential outcomes, estimation requires modeling assumptions
about the potential outcomes. 
The example in Section~\ref{dataexamp} illustrates how conditional
estimands provide a meaningful summary for the effect of
parental education on a child's education achieves, which cannot
be quantified by considering unconditional estimands.  

Several other multidimensional estimands have been
proposed in the econometrics literature. \cite{boes2013nonparametric} proposes 
multidimensional estimands that describe differences in the distributions under
the control and treatment (rather than differences in the potential outcomes
themselves). For example, Boes defines 
$\Delta^{\rm ate}_i=P_1(i)-P_0(i)$ and 
$\Delta^{\rm SO}_i=\sum_{j\leq i}(P_1(j)-P_0(j))$. 
Similar effects are discussed in \cite{li2008bayesian} in the
context of a specific Bayesian model for ordered data.
If one observes $\Delta^{\rm ate}_i\geq 0$ for all $i$ or
$\Delta^{\rm SO}_i\geq 0$ for all $i$, it suggests that the potential
outcome under treatment is stochastically greater than the potential
outcome under control. However, if the $\Delta^{\rm ate}_i$ are sometimes
greater than $0$ and sometimes less than $0$, this estimand appears to carry little 
information unless particular meaning is available for
a single level of $\Delta^{\rm ate}_i$ or $\Delta^{\rm SO}_i$. 
If the goal of the estimand is to capture the difference
between the distributions under control and treatment, the one-dimensional distance
estimand proposed above can do that with a scalar summary. 
These estimands  
differ significantly from the estimands we proposed
because we are interested
in conditional statements that are meaningful for each level $i$ individually,
whereas the unconditional statements of \cite{boes2013nonparametric}
most often must be presented for all levels $i$ simultaneously. 
Also note that interpretability of the latter requires the signs of all the related effects to match, an empirical result that cannot be guaranteed for any specific data set.

%

\subsection{Estimands on the latent scale\label{sec:laten}}
When a latent variable formulation of ordinal non-numeric 
potential outcomes is used, causal estimands can be defined
on the latent scale. Recall that we refer to the pair $(Z_i(0),Z_i(1))$
as to the latent potential outcomes for unit $i$ whenever there 
is a function $g(\cdot)$
that maps 
$Y_i(1)=g(Z_i(1))$ and
$Y_i(0)=g(Z_i(0))$. 
If the function $g$ is fully
identified then the continuous latent potential outcomes can be used
as de-facto outcomes, and causal analyses can leverage  classical results from  the literature about continuous potential outcomes \citep{rubin2005causal}. In particular,
in this case, the average treatment effect on latent potential outcomes, $\E[Z(1)-Z(0)]$, becomes a meaningful causal effect. However, the 
identifiability of $g$ is key to  estimands defined on the latent scale being meaningful. 
Most often we attempt to infer $g$ from the data, which requires
defining an explicit dependence of the map on the two treatments
yielding functions $g_1$ and $g_0$ for the treated and control
potential outcome maps. These functions likely
have both a location and a scale lack of identifiability, in the 
sense that $\exists h_t$ such that $h_t(cZ_i(t)+b)=g_t(Z_i(t))$. This
non identifiability is critical for the interpretability of causal effects
on the latent scale. In particular, non identifiability of the scale
leads to the following two latent ATEs, under two different
scale assumptions, having the same 
interpretation on the observed scale:
$$\E[Z(1)-Z(0)] \text{ and } \E[5Z(1)-5Z(0)].$$
The latter is five times the size of the former on the latent 
scale, which is undesirable.

The good news is that when this lack of identifiability persists (and
it is unlikely that there is a situation where non identifiability is eliminated in a 
non-artificial way) we can still use the latent scale in order to define
causal effects on the observed scale. For instance, we can consider the estimand
 $\median[g_1(Z(1))|g_0(Z(0))=j]$
as estimand for the conditional median of the
potential outcomes on the observed scale, under treatment, given a particular level
under control. 
While this might appear tautological, the explicit dependence 
on the latent scale and the map $g_t$ is important as it is a statement
about the {science}. 

\section{Hypothesis testing for ordinal causal effects}\label{sec:test}

In causal analyses a common goal is to
conduct a Fisher exact test for a sharp null hypothesis.
In the classical setting, the sharp null hypothesis of
constant treatment effect can be studied in the same
way as a null of no effect at all. When dealing with
ordinal non-numeric data, however, these two cases need to be analyzed separately. 
For the sharp null of no effect, we construct a Fisher
exact test as in the classical literature, while 
the testing for constant non zero effects requires a more 
precise definition of the hypothesis and leads to 
a permutation test that is not exact as the null is
composite. 

\subsection{Testing for no effect\label{sec:noeff}}

The null hypothesis of no individual level effect 
is the same for ordinal non-numeric data as it is for
numeric data. We define the sharp null
of no effect as $H_0^{\rm no\ effect}:Y_i(0)=Y_i(1)$.
That is,
under this null, an individual's potential outcomes
are equal and so by observing one of the potential
outcomes, we observe the complete
{science} table. This leads to the following construction
of a randomization distribution for the observed data $\{(Y_i^{\obs},W_i):i\leq m\}$
where $W_i$ is the observed treatment status and $Y_i^{\obs}$ is the observed
outcome:
\begin{enumerate}
\item Let $\Pi$ be the collection of all permutations of 
the integers $1$ to $m$.
\item For $\pi\in\Pi$ compute a test statistic $T_\pi$ based on
$\{(Y_i^{\obs},W_{\pi i}):i\leq m\}$.
\item Calculate $\Pr(T>T_{\rm ident})$ where $T$ is distributed according
to the empirical distributions of $T_{\pi}$.
\end{enumerate}

In classical settings, the test statistic $T$ is frequently chosen to be
the average treatment effect. Since this is not meaningful
for ordinal non-numeric outcomes, a different one dimensional 
statistic must be chosen. In particular, it is reasonable
to consider
a test statistic such as $T=d(P_1,P_0)$ for
$d(\cdot,\cdot)$ some measure of distance such as total variation.

\paragraph{Example:} In Section~\ref{dataexamp}
we analyze in detail data from the 1994 General
Social Survey on educational outcomes of children
whose parental education was either below college or
above college \citep{davis1991general}. The distributions of the potential
outcomes under control and under treatment are
presented in Figure \ref{fig:margins}. A visual
inspection of the two distribution suggests that 
they are different. We make this concrete by 
performing the test for the sharp null of no 
effect.
The test statistic we use is the total variation
distance between the 
marginal distributions of the potential outcomes.
The range of the null distribution based on 10000
permutations of the treatment assignment 
is $(0.174,0.413)$ while the
observed value of the test statistic is $0.804$,
giving a $p$-value of $0$.

\begin{figure}[ht]
	\centering
	\includegraphics[width=0.90\textwidth]{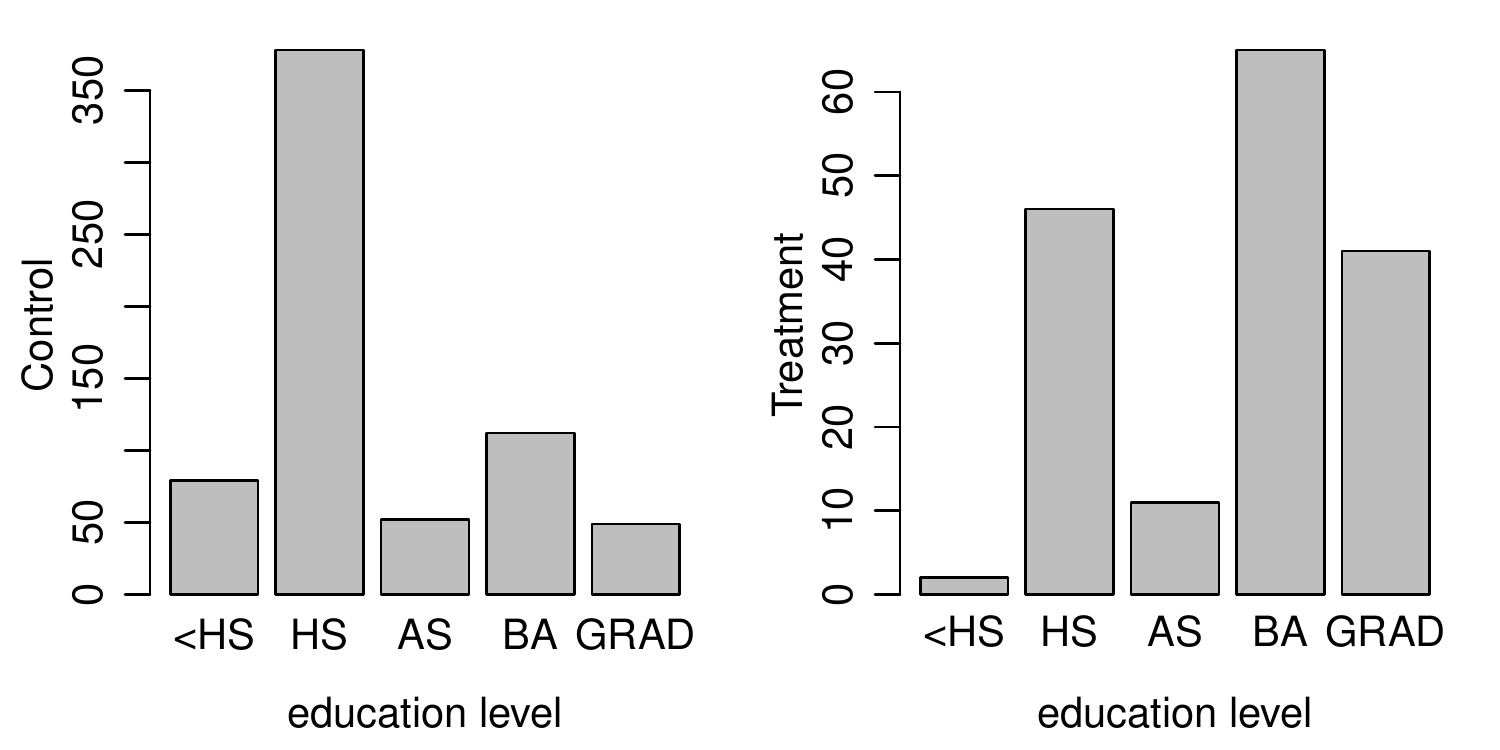}
	\caption{Marginal distributions of outcomes
	under control and under treatment for educational
	outcome data from the 1994 General Social Survey. 
	``$<$HS''=``less than high school'', ``HS''=``high school diploma'',
	``AS''=``associates'', ``BA''=``bachelor degree'', ``GRAD''=``graduate degree''.}
	\label{fig:margins}
\end{figure}
\subsection{Testing for non zero effects\label{sec:const}}

As previously discussed, since the responses
we consider are ordinal non-numeric, the distance
between consecutive categories cannot be assumed equal. As such,
the sharp null of ``constant effect'' that's normally denoted by
$H_0^{\rm const}:Y_i(1)-Y_i(0)=c >0$ is no longer meaningful 
as it would immediately assign a numeric scale to the data.
Because of the lack of scale, it is tempting to 
phrase a constant effect null hypothesis conditionally as
$H_0^{\rm const}:Y_i(1)-Y_i(0)=c$ such that $c>0$ only if $Y_i(0)=j$ meaning that
the effect of treatment is a constant $c$ categories for individuals
with control potential outcome equal to $j$ and zero otherwise. But such a conditional statement
does not fully specify the null space. Without loss of generality, let
$c=1$ and $j=1$ (the lowest category); that is, under the null, the effect of treatment on individuals
in the lowest category, under control, is one category exactly. Unfortunately,
if we observe $Y_i^{\obs}=1$ for $W_i=1$, it is not possible to construct the full
{science} table since it is unclear what to set $Y_i^{\mis}$.
This leads to the following definition of what is arguably the simplest null hypothesis of  non zero effect.
\begin{definition}\label{prop:basic}
For a $k$ level ordinal outcome, fixing $j<k$, the simplest non zero effect null hypothesis is given by
\begin{itemize}
\item For $Y_i(0)=j$, $Y_i(1)-Y_i(0)=0$ with probability $1-p$.
\item For $Y_i(0)=j$, $Y_i(1)-Y_i(0)=1$ with probability $p$.
\item Marginal probabilities: $\Pr(Y_i(0)=j)=l_{j}$ 
and $\Pr(Y_i(0)=j+1)=l_{j+1}$.
\item For $Y_i(0)\neq j$, $Y_i(1)-Y_i(0)=0$.
\end{itemize}
\end{definition}
Note that the statement ``with probability $p$'' can refer to a superpopulation
quantity, or simply to the finite population proportion of individuals
whose science table contains the requisite potential outcomes.

A scenario when such a null hypothesis is meaningful comes up
in follow-up studies, where preliminary results suggest a nonzero
effect for certain groups, but do not have any information about other
groups. This is the simplest null in the sense that one cannot 
formulate a 
nonzero null for ordinal non-numeric data without specifying 
at least this many conditionals and marginals 
or imposing a numeric scale. 
The above null can also
be stated as a condition on the potential outcome $Y_i(1)$ under
treatment by replacing ``For $Y_i(0)=j\dots$'' with ``For $Y_i(1)=j+1\dots$''
and replacing $p$ with $1-p$.
The statement of the null in Definition~\ref{prop:basic} also dictates 
the procedure by which one can test the null. Fixing $j,p$ 
and $l_j,l_{j+1}$ we can
complete the {science} table in the following way:
	\begin{itemize}
	\item If $W_i=0$ and $Y_i^{\obs}=j$ then $Y_i^\mis=j+1$ with probability $p$
	or $Y_i^\mis=j$ with probability $1-p$. 
	\item If $W_i=1$ and $Y_i^{\obs}=j+1$ then set $Y_i^{\mis}=j$ with probability $q=pl_{j}/(pl_{j}+l_{j+1})$
	or $Y_i^{\mis}=j+1$ with probability $1-q$.
	\item Else, let $Y_i^\mis=Y_i^\obs$.
	\end{itemize}
A permutation test can then be performed using the
complete {science} table. Because the null is composite,
multiple science tables must be computed and the distribution
of the test statistic constructed over all of them.
It is important to note that the quantity of interest in this
setting is the conditional probability $p$, suggesting
that during the testing procedure the marginals are nuisance
parameters that can either be integrated over,
or potentially set to the marginals of the observed
data.

A more general null hypothesis
that has an interpretation as a constant non zero effect can be motivated by a
latent variable formulation of ordinal data. Recall that,  if the 
potential outcomes have a latent representation, then for the continuous latent
potential outcomes $(Z_i(0),Z_i(1))$ one could test for a 
constant treatment effect of 
$H_{0,\ {\rm latent}}^{\rm constant}:Z_i(1)-Z_i(0)=c$.
Under certain conditions, an interpretation of this null is available in terms of the null
specified by Definition~\ref{prop:basic}. The latent potential outcomes are mapped
back to the observed scale by the function $g$. Here the explicit dependence
of $g$ on the treatment is suppressed since we assume an
additive treatment effect, that is $g_1(\cdot)=g_0(\cdot-c)$. 
Let $c>0$ be such that for all 
levels $j$ and for all latent values $z$ for which $g(z)=j$ we have
$g(z+c)\leq j+1$, with equality for some $z$. That is, for $c$ small enough, a constant latent effect
is interpretable as at most improvement by one category on the observed scale
for all individuals but with different probabilities.
That is, we would have a combination of the simple nulls of
Definition~\ref{prop:basic} where for each level $j$, the conditional
probability that $Y_i(1)-Y_i(0)=0$ for $Y_i(0)=j$ is given by $p^{(j)}=\frac{|z_j|-c}{|z_j|}$ where
$z_j=\{z:g(z)=j\}$.

For a general $c$, the interpretation of the test on the observed scale
becomes more complicated but involves a similar structure to the 
one in Definition~\ref{prop:basic}: WLOG let $c>0$ be as before, but for a single  
category $j$, for all $z$ with $g(z)=j$ we have $g(z+c)\leq j+2$ with equality for some $z$. 
As such, the improvement is at most two levels for individuals for whom the
potential outcome under control is $j$. As such, a 
null hypothesis that corresponds to this on the observed scale
requires defining probabilities $p_0,p_1,p_2$ for $Y_i(1)-Y_i(0)\in\{0,1,2\}$ when $Y_i(0)=j$.
In particular, $p^{(j)}_0=\frac{|z_{j}|-c}{|z_{j}|}$, $p^{(j)}_1=\frac{|z_{j+1}|}{|z_{j}|}$
and $p_2^{(j)}=1-p^{(j)}_0-p_1^{(j)}$.

\subsection{Example and fiducial type intervals} 
\label{sub:example}
In this Section, we consider a (simulated) randomized experiment
with 500 units and an ordinal non-numeric outcome
labeled $\{1,2,3\}$, where label 1 indicated control. The joint distribution of 
the potential outcomes for the simulated data is given by 
Table \ref{tab:joint}. This type of joint represents
information that the treatment leads to at most
a change of plus one category with possibly different
probabilities of change conditional on the potential
outcome under control. The estimands of interest are
the conditional probabilities $q_1,q_2$, and it is easy to see
that they both fit into the paradigm of 
Proposition~\ref{prop:sep} of the next section and so they 
can be estimated without specifying an additional model for the data. For example
$1-\hat{q}_1={\rm mean}(Y(1)=1)/{\rm mean}(Y(0)=1)$ where
each of the means is an entry in the empirical
distribution of the marginals. The estimates are 
truncated at 0 and 1 to get valid probabilities, as it is commonly done for correcting method of moments estimators,
though this correction is rarely necessary in the large sample setting we consider. 

\begin{table}
	\caption{Joint distribution of the potential outcomes.
	\label{tab:joint}}
	\centering
	\centering
\begin{tabular}{r|rrr}
  &\multicolumn{3}{c}{$Y(1)$}\\
 & 1 & 2 & 3 \\ 
  \hline
1 & $(1-q_1)c_1$ & $q_1c_1$ & 0 \\ 
 $Y(0)$ 2 & 0 & $(1-q_2)c_2$ & $q_2c_2$ \\ 
  3 & 0 & 0 & $c_3$ \\ 
   \hline
\end{tabular}
\end{table} 

The null hypothesis described in the previous
section requires specifying values $(\eta_1,\eta_2)\in(0,1)^2$
that correspond to the conditional probability of a positive
one category change when the potential outcome under control is
one and two, respectively, and parameters $(\nu_1,\nu_2,\nu_3)$
in the simplex that correspond to the marginals. 
Let $p(\eta_i)$ be the
$p$-value for the conditional probability $q_i$ in 
Table~\ref{tab:joint}.
Performing the test for
multiple values of $\eta_i$ and $\nu_i$
recovers different $p$-values. This suggests that we can construct
$100(1-\alpha)\%$
fiducial type intervals for the conditional effects $q_i$ with
the following bounds:
 $\xi_i^L=\sup\{\eta_i:p(\eta_i)\leq\alpha/2\}$ and
 $\xi_i^U=\inf\{\eta_i:p(\eta_i)\geq 1-\alpha/2\}$
 \citep[e.g., see][]{wang2000fiducial,dasgupta2014causal}. 
Since the nuisance parameters $(\nu_1,\nu_2,\nu_3)$ are not
known, we must also consider them in the sequence of nulls.
As such, by projecting the $p$-values down to the space of $(\eta_1,\eta_2)$
the intervals we recover are conservative. This procedure is
superior to using a plug-in estimate for the nuisance
parameters as the intervals using that procedure can be either 
conservative or anti-conservative \citep{imbens2014causal}.

Here we simulate $N=500$ experimental units
from the joint distribution
in Table~\ref{tab:joint} with parameters
$(c_1,c_2,c_3)=(1,1,1)/3$
and $(q_1,q_2)=(7/10,2/3)$. 
From the observed outcomes, the estimates are
$\hat q_1=0.73$ and $\hat q_2=0.61$.
We perform the test for
null hypothesis where $(\eta_1,\eta_2)$ are 
on a uniform $30\times 30$ grid in $[0.1,0.999]^2$
and $(\nu_1,\nu_2,\nu_3)$ are sampled uniformly
on the simplex, but restricted to be no bigger than 
$0.6$ and no smaller than $0.15$ each. 
For each 
null, we perform the test as described in 
Definition~\ref{prop:basic}, constructing 1000 null
science tables and getting 100 randomizations from each one.

Figure~\ref{fig:dens} provides the null densities for a 
particular null
for $q_1$ and $q_2$. The dashed vertical lines are the
observed values having $p$-values $0$ and $0.62$ respectively.
Figure~\ref{fig:pvals} provides 
the $p$-values for the $q_i$ as a function of
the $\eta_i$. The vertical lines provide the two
$95\%$ fiducial type intervals: $[0.63,83]$ and $[0.56,0.83]$
respectively.

\begin{figure}[!ht]
	\centering
	\includegraphics[width=0.9\textwidth]{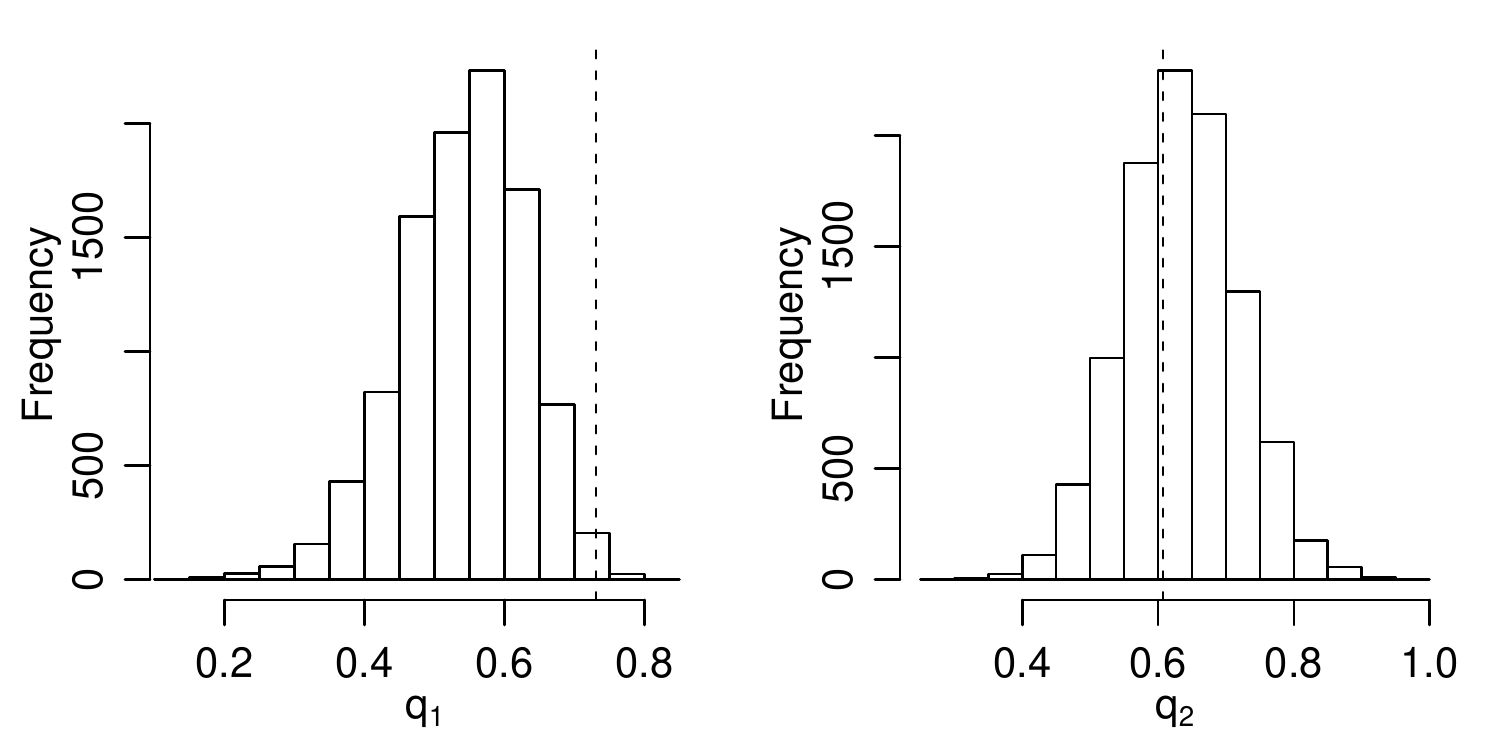}
	\caption{Randomization distributions for $q_1$ (left panel) and $q_2$ (right panel)
	for parameter values $(\eta_1,\eta_2)=(0.487, 0.624)$ 
	and $(\nu_1,\nu_2,\nu_3)=(0.280, 0.549, 0.171)$.}
	\label{fig:dens}
\end{figure}
\begin{figure}[!t]
	\centering
	\includegraphics[width=0.9\textwidth]{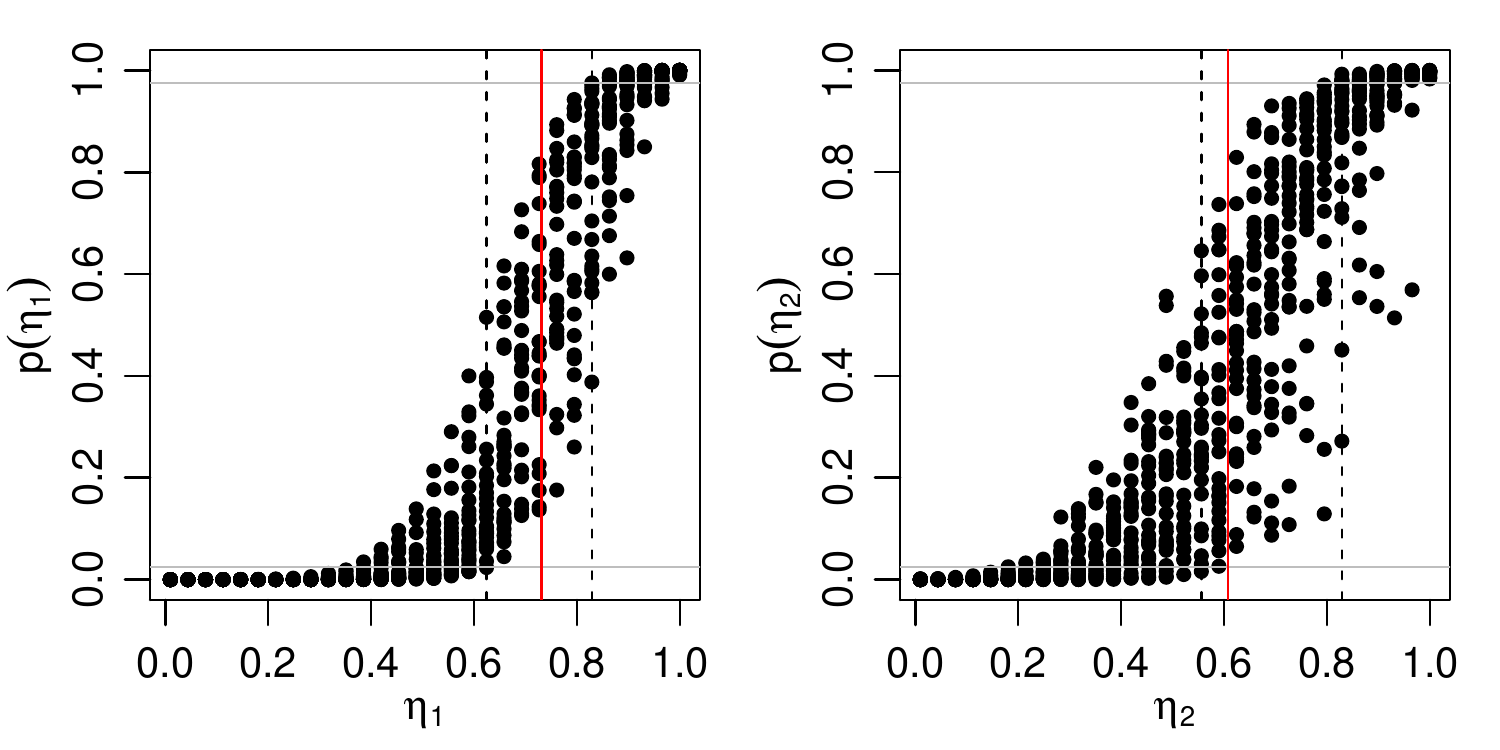}
	\caption{$p$-values as functions of parameters $\eta_1$ and $\eta_2$
	with the dashed lines representing $95\%$ conservative fiducial intervals
	and the red lines representing the observed data estimates.}
	\label{fig:pvals}
\end{figure}

\section{Inference and approaches to identification of causal effects}\label{sec:estimation}
Estimation of causal effects has been discussed in detail for
many types of data \citep{rubin1978bayesian,rubin1974estimating,rubin2005causal}, 
but no explicit discussion is available in the statistics 
literature for ordinal non-numeric data. Throughout we will assume the standard assumptions of the Rubin Causal Model, such as the stable unit treatment value assumption (SUTVA), ignorability of the assignment mechanism that leads to the realized outcomes on the observed scale, and (whenever relevant) of the assignment mechanism that leads to the realized outcomes on the latent scale.
In such situations, estimands that are meaningful 
in both ordinal non-numeric and continuous outcome settings can be estimated
using existing machinery \citep{imbens2014causal}. In particular, when operating on the 
observed scale
without assuming a model for the potential outcomes,
we can estimate any estimand that can be written explicitly as a 
difference between a function of the distribution of the 
responses under treatment and the distribution of the responses
under control. In the classical setting of continuous outcomes
this translates to the average treatment effect being estimable 
since expectations are linear (and so $\E[Y(1)-Y(0)]=\E[Y(1)]-\E[Y(0)]$). 

For ordinal non-numeric data, however, an estimand that directly compares outcomes under control and under treatment without including a model for the science does not exist.

\begin{proposition}\label{prop:sep}
For ordinal non-numeric data, an estimand that 
cannot be written as a combination of
separated functions such as $$f(Y(0),Y(1))=f_1(Y(1))-f_0(Y(0))$$
cannot be estimated without a model for the science.
For any estimand that can be separated as above, $f_1$
is a function of the distribution of $Y(1)$ alone and $f_0$
is a function of the distribution of $Y(0)$ alone. If there exist
unbiased estimates
of $f_0$ and $f_1$, then there is an unbiased estimate of
the estimand of interest.
\end{proposition}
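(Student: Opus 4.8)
\emph{The last two assertions are bookkeeping; the first is the one with content, and I would take them in that order.} Suppose the estimand is of the displayed separated form, $f(Y(0),Y(1)) = f_1(Y(1)) - f_0(Y(0))$. By construction the first summand takes only the treated potential outcome as its argument, so whatever summary $f_1(Y(1))$ denotes --- an expectation, a quantile such as $\median[Y(1)]$, a $\mode[Y(1)]$, or any other functional --- it is by definition a functional of the law of $Y(1)$, i.e.\ of $P_1 = \mathbf{1}^t P$ alone; symmetrically $f_0$ is a functional of $P_0 = P\mathbf{1}$ alone. Neither term depends on the coupling of $Y(0)$ and $Y(1)$, i.e.\ on $P$ beyond its two margins. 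Granting unbiased estimators $\hat f_1$ of $f_1$ and $\hat f_0$ of $f_0$, linearity of expectation yields $\E[\hat f_1 - \hat f_0] = f_1 - f_0$, so $\hat f_1 - \hat f_0$ is unbiased for the estimand. This is operational because, under the completely randomized (more generally ignorable) assignment assumed throughout, the treated arm is a simple random sample revealing $Y(1)$ and the control arm a simple random sample revealing $Y(0)$, so $f_1$ and $f_0$ can each be estimated from the corresponding arm.

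\emph{Non-separated estimands require a model.} I would read ``cannot be estimated without a model for the science'' as a non-identifiability statement. Under SUTVA and ignorable assignment with positive assignment probabilities, the sampling distribution of the observed data $\{(Y_i^\obs, W_i)\}$ is determined by the (known) assignment mechanism together with the pair of marginals $(P_0,P_1)$ only: conditionally on the assignment the control arm reveals a sample from the $Y(0)$ column of the science table and the treated arm a sample from the $Y(1)$ column, and since no unit ever contributes both coordinates, the joint $P$ enters the likelihood nowhere. Hence the map from the science to the observed-data law factors through $P \mapsto (P\mathbf{1}, \mathbf{1}^t P)$, which is not injective once $k \ge 2$: when both margins are non-degenerate the independent coupling $P = (P\mathbf{1})(\mathbf{1}^t P)$ and a comonotone coupling with the same margins are distinct joints inducing identical observed-data distributions. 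Any estimand $\theta(\cdot)$ that is genuinely a function of the joint --- i.e.\ not expressible through $(P_0,P_1)$, the ``combination of separated functions'' of the statement, of which the additive difference $f_1 - f_0$ and more generally any distance $d(P_0,P_1)$ are instances --- takes different values at two such indistinguishable $P$'s, so no statistic can be unbiased (indeed consistent) for it simultaneously. Restoring estimability therefore requires precisely an extra assumption reconstructing the joint from its margins, that is, a model for the science.

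\emph{Main obstacle.} The only delicate point is the formalization in the second step: one must commit to a precise notion of ``not estimable'' and exhibit a clean witness pair of joints with common margins. The non-identifiability route above is the cleanest, reducing everything to (i) the observed-data law factoring through $(P_0,P_1)$ and (ii) non-injectivity of $P \mapsto (P\mathbf{1}, \mathbf{1}^t P)$, both immediate. One should also record the regularity --- ignorability with positive assignment probabilities, already standing in the paper --- that makes the margins $(P_0,P_1)$ themselves identified; without it even the separated estimands fail, and with it the chain ``separable $\Leftrightarrow$ function of the identified margins $\Leftrightarrow$ estimable without a model,'' together with the unbiasedness transfer of the first step, completes the argument.
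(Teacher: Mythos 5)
Your proposal is correct and follows essentially the same route as the paper, which argues exactly that a separated estimand depends only on the marginal distributions (so unbiased estimates of $f_0$ and $f_1$ combine by linearity into an unbiased estimate), while a non-separated estimand necessarily depends on the joint distribution of $(Y(0),Y(1))$, which is not identified from observed data and hence requires a model for the science. Your write-up is simply more explicit than the paper's brief argument---in particular the witness pair of couplings with common margins makes the non-identifiability step concrete where the paper only appeals to classical results.
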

The proof of Proposition \ref{prop:sep} is a direct
consequence of classical results about estimation
of causal effects \citep{imbens2014causal}. Intuitively, if an estimand cannot be written
as a combination of functions that only depend
on the marginal distributions of the potential outcomes
then it necessarily depends on their joint distribution. 
Estimation in this case requires a model for the science.
If an estimand can be separated as such and if unbiased
estimates exist of each part, then any combination of unbiased
estimates remains unbiased.

Proposition \ref{prop:sep} unsurprisingly states that 
``simple'' estimands can be estimated as they previously have. 
However, it quickly emerges that for ordinal non-numeric data
the estimands of greatest interest do not satisfy the separation of 
Proposition \ref{prop:sep}. For example, the conditional medians
$\median[Y(1)|Y(0)=i]$ described in Section \ref{sec:estimand}
do not satisfy the condition. To estimate these effects we require additional 
assumptions about the science. The first assumption we consider
is that of monotone treatment
effects which assists in 
the identifying the joint distribution of the potential outcomes. 
While it is appealing, this assumption does not fully resolve the 
issues with estimating the estimands of interest. As such, we 
develop estimation under the latent variable assumption---explicitly describing the functional
relationship between the potential outcomes on the 
observed and latent scales.

%
%

\subsection{Monotone treatment effect}
Formally, the assumption of a monotone treatment effect 
states that the potential outcome under treatment is at least
as large as that under control. That is, $Y_i(1)\geq Y_i(0)~\forall i$.

We first consider the case of a binary treatment
and ordered binary outcome. If we can assume monotonicity
of treatment then we are able to recover the full joint
distribution from the two marginals and hence any of the estimands previously
listed.
Monotonicity of treatment gives us that $\Pr(Y(0)=1,Y(1)=0)=0$.
This immediately implies that $\Pr(Y(0)=0,Y(1)=0)=\Pr(Y(1)=0)$.
We can just as easily recover the other elements 
of the joint distribution:
\begin{eqnarray*}
 \Pr(Y(0)=1,Y(1)=1) & = & \Pr(Y(1)=1|Y(0)=1) \cdot \Pr(Y(0)=1) \\
                    & = & \Pr(Y(0)=1)
\end{eqnarray*}
where the second equality is due to the conditional
probability equaling 1 (by monotonicity).
The final probability can be recovered via
the additivity of probabilities to get
$\Pr(Y(0)=0,Y(1)=1)=1-\Pr(Y(1)=0)-\Pr(Y(0)=1)$.

In cases where the outcome has more than two levels, 
monotonicity is not sufficient for fully identifying
the joint probability distribution of the potential outcomes.
However, certain estimands of interest can be bounded
by estimands that do not require knowledge of the full joint. 
\begin{proposition}
For $k$ level ordinal non-numeric outcomes under the assumption 
of a monotone treatment effect, we have
\begin{enumerate}
	\item\label{bnd1} $j\leq\median[Y(1)|Y(0)=j]$
	\item\label{bnd2} $j\leq\median[Y(1)|Y(0)\geq j]$
	\item\label{bnd3} $\median[Y(0)|Y(0)\leq j]\leq\median[Y(1)|Y(0)\leq j]$
\end{enumerate}
\end{proposition}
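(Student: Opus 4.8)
The plan is to reduce each of the three inequalities to an elementary fact about medians and to check that the monotonicity assumption $Y_i(1)\geq Y_i(0)$ supplies exactly the needed structure once we condition on the appropriate event. First I would fix a convention for the median of an ordinal variable $Y$ on $\{1,\dots,k\}$, namely $\median[Y]=\min\{m:\Pr(Y\leq m)\geq 1/2\}$, and record two observations that follow immediately from monotonicity of a cdf: (a) if $Y\geq a$ almost surely, then $\median[Y]\geq a$; and (b) if $Y$ first-order stochastically dominates $Y'$, i.e. $\Pr(Y\leq m)\leq\Pr(Y'\leq m)$ for every $m$, then $\median[Y]\geq\median[Y']$. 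I would also note the standing assumption that the conditioning events have positive probability so that the conditional laws are well defined.

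For the first claim, I would condition on $\{Y(0)=j\}$; on this event monotonicity gives $Y(1)\geq Y(0)=j$, so $Y(1)\geq j$ holds almost surely under the conditional distribution of $Y(1)$ given $Y(0)=j$, and observation (a) yields $\median[Y(1)\mid Y(0)=j]\geq j$. The second claim is the same argument with the conditioning event $\{Y(0)\geq j\}$: there $Y(1)\geq Y(0)\geq j$ almost surely, so (a) applies again. For the third claim I would condition on $\{Y(0)\leq j\}$ and compare the conditional laws of $Y(1)$ and $Y(0)$ on that event: since $Y(1)\geq Y(0)$ pointwise, for every $m$ the event $\{Y(1)\leq m\}\cap\{Y(0)\leq j\}$ is contained in $\{Y(0)\leq m\}\cap\{Y(0)\leq j\}$, hence $\Pr(Y(1)\leq m\mid Y(0)\leq j)\leq\Pr(Y(0)\leq m\mid Y(0)\leq j)$; that is, $Y(1)$ stochastically dominates $Y(0)$ within the conditioning set, and observation (b) gives the inequality $\median[Y(0)\mid Y(0)\leq j]\leq\median[Y(1)\mid Y(0)\leq j]$.

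The only genuine subtlety, and the step I would treat most carefully, is the median convention and behavior at ties: for a discrete outcome the median can be non-unique, and one should confirm that the conclusions are robust to the choice among the lower-median, upper-median, and set-valued conventions. Fixing the lower-median convention above makes (a) and (b) clean; I would then remark that the same inequalities hold verbatim under the upper-median convention and, for the set-valued median, in the natural weak (interval) sense, since both pointwise domination and stochastic domination are preserved. No other part of the argument changes, so the proposition is essentially a bookkeeping consequence of $Y_i(1)\geq Y_i(0)$ together with the fact that conditioning on events defined through $Y(0)$ preserves the pointwise ordering.
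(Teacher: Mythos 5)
Your argument is correct and follows essentially the same route as the paper's own proof: monotonicity gives $Y(1)\geq j$ almost surely on the conditioning events for the first two bounds, and the third follows because the conditional (truncated) distribution of $Y(1)$ given $Y(0)\leq j$ stochastically dominates that of $Y(0)$. Your additional care with the median convention just makes explicit what the paper leaves implicit.
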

\begin{proof}
Monotonicity states that $\Pr(Y(1)\geq Y(0))=1$,
which implies $\Pr(Y(1)\geq i|Y(0)=i)=1\forall i$. 
This proves \ref{bnd1} and \ref{bnd2}. Inequality 
\ref{bnd3} follows
from the fact that the truncated distribution $Y(1)|Y(0)\leq j$
is necessarily shifted to the right of the
truncated distribution $Y(0)|Y(0)\leq j$.
\end{proof}
Similar bounds can be derived for statements about other
conditional functions such as the mode or other quantiles.
This assumption is only appropriate when the non-negativity
or non-positivity of the treatment effect is known \emph{a priori}. 
For example, it is reasonable to assume that a reading comprehension
program can only improve reading skills, but it might be inappropriate 
to assume that attending a classical music concert cannot have both
positive and negative effects on a person's state of mind.

%
%

\subsection{Latent variable formulation}
We have previously introduced the latent variable
formulation for ordinal non-numeric potential outcomes. 
Intuitively, this approach makes an assumption about the
underlying {science}. This is because we require an
explicit functional
relationship between the potential outcomes
proper, $(Y_i(0),Y_i(1))$,
and the latent potential outcomes, $(Z_i(0),Z_i(1))$. We 
write this map as a pair of functions $g_t$ for $t\in\{0,1\}$ 
that map $Z_i(t)\rightarrow Y_i(t)$.
We make the dependence on the treatment explicit since estimation
cannot proceed without identifying what the form of treatment
is on the latent scale. 
The only functional restriction on the two $g_t$ functions is that
$y=g_t(z)\geq g_t(z^\prime)=y^\prime$ implies that $z\geq z^\prime$. 
This general formulation is not a model for estimation but rather
a fundamental assumption about the science. 

For estimation to work, we must 
choose an explicit functional form for the treatment effect. 
A natural choice
here is a linear treatment effect, thus simplifying the notation
as we have $g(Z(1))=g(Z(0)+c)$.
A possible nonlinear functional form for the treatment effect
makes the assumption that $g_0$ and $g_1$ have different 
cutoff values for mapping between the latent and observed scale. 
Whatever the assumption made about $g$, the procedure for 
inference is as follows.
\begin{enumerate}
\item\label{les1} Choose the functional form for $g_0$ and $g_1$.
\item\label{les2} Write explicitly $Y_i^{\obs}=g_{W_i}(Z_i^{\obs})$ and 
let $Z_i^{\obs}=f(X_i)$ 
where $W_i$ is the assigned treatment to unit $i$. In the
language of generalized linear models
 $g$ is our link function and $f$ the mean function that
is there for notational purposes to explicitly state the 
functional dependence of the latent variable on additional covariates. 
\item\label{les3} Estimate $\hat g_0$, $\hat g_1$ and $\hat f$ using the observed data: 
$(Y_i^{\obs},W_i,X_i)$.
\item\label{les4} Impute the missing potential outcomes $Y_i^{\mis}$
by writing 
$\hat Z_i^{\mis}=\hat f(X_i)$
and let $Y_i^{\mis}=\hat g_{1-W_i}(\hat Z_i^{\mis})$.
\item\label{les5} Estimate the estimand on the observed scale using the 
imputed data $(Y^{\obs},Y^{\mis})$.
\end{enumerate}
The procedure described above takes advantage of the
continuity of the latent scale to make inference about
the observed scale tractable. When we consider a 
linear effect of treatment on the latent scale, we can
write in steps \ref{les2} and \ref{les4} above: $Z_i^{\obs}=f(X_i,W_i)$ and drop the 
dependence of $g$ on the treatment assignment. 
One of the biggest advantages for employing the latent 
scale formulation is the ability to estimate the uncertainty
about the observed scale estimands. To do so we must
compute the variance of the predictive sampling distribution of
the observed scale potential outcomes. While this can be done
explicitly as outlined above, the desired quantities can be conceptualized as a consequence of a Bayesian estimation approach in which the estimands on the 
observed scale are functions of the posterior predictive
distribution. We outline such a
Bayesian estimation procedure next, and we discuss the selection of
priors.

\section{Bayesian formulation}\label{sec:bayes}
Bayesian estimation procedures play an important role
in causal analysis \citep{rubin1978bayesian}. We  consider a single treatment and single
control group, with $k$ ordinal non-numeric outcomes. For each unit $i$ we  observe a potential
outcome associated with its treatment assignment as well as 
covariates that provide background information and will be used
to adjust the outcomes. As in Section \ref{sec:estimation}, we 
 write $Z_i(t)$ for the latent scale potential outcome
for treatment $t$, and we need to estimate the functions
$g_t$ and $f$. 
The potential outcomes on the observed scale, $Y_i(t)$, take on values
in $\{1,\dots,k\}$. There are $n$ units.

Formally, we can write the assumed model as follows.
\begin{align*}
Y_i(t)&=g_t(Z_i(t))\\
Z_i(t)&=f(X_i,W_i) + \epsilon_i(t)\\
\epsilon_i(t)&\overset{iid}{\sim}\pi
\end{align*}
where $\pi$ is the distribution of the errors that has
no unknown parameters (this is an identification assumption
and the reason why latent scale estimands are frequently
inappropriate),
and where $Y_i(t)=g_t(Z_i(t))=j$ if $z\in(s_{j-1}^t,s_j^t)$ 
for a monotonic increasing sequence 
$\mathcal{S}^t=\{-\infty=s_0^t,s_1^t,\dots,s_k^t=\infty\}$. 
For example, if $g_t=g$, $f(X_i,W_i)=X_i\beta+W_i\beta_w$,
$\mathcal{S}^1=\mathcal{S}^0$ and $\pi$ is a standard normal distribution, then this is a standard order probit model. 
If $\pi$ is the logistic distribution then this is a
standard ordered logit, and so on. One can also consider 
a scenario where $\beta_w=0$ but $g_1\neq g_0$ and
$\mathcal{S}^1\neq\mathcal{S}^0$; that is,  no linear effect of treatment 
on the latent variables, but possibly different cutoff values
for the treated and control groups. All of these options 
are estimable using a Bayesian approach as long
as the distribution of the missing
potential outcomes conditional on the observed ones is tractable.
Below we outline a standard Gibbs sampler scheme for the
ordered probit model.  

 
\subsection{Prior choice for the ordered probit}
The first step in Bayesian inference is prior choice. 
We require priors for the parameters $(\beta,\beta_w)$, as
well as for the cutoff values $\mathcal{S}$. In a slight
abuse of notation, we will write $X_i$ for the combined vector
of pre-treatment variables with the indicator for treatment and we 
will write $\beta$ for the combined vector of coefficients. 
Since the latent errors are standard Gaussian, a natural conjugate
prior for the parameters $\beta$ is also Gaussian, for example, with
mean 0 and covariance matrix $n(X^tX)^{-1}$. 
The more complicated prior choice is for the cutoff 
values $\mathcal{S}$. Note that, if we knew 
the cutoff values exactly, we would in fact have the proper
scaling for doing inference on the latent scale. 

Even though the cutoff parameters are critical for  identifying the model,
 a default prior for the cutoff 
values $\mathcal{S}$  does not exist. We note that a prior for the cutoff values does not have to 
respect the ordering of the elements of $\mathcal{S}$
since the ordering will be imposed by the likelihood function.
As such, without scientific knowledge,
a reasonable prior that carries very little
information is essentially flat. A potential choice is a product of mean zero
normal variables with a large variance parameter. 

Because of the difficulty of choosing a reasonable prior
for $\mathcal{S}$ and since we are not interested in 
interpreting the latent scale of the potential outcomes,
we argue it is more meaningful to consider the rank likelihood
as the data likelihood as discussed below. 
 
\subsection{Posterior inference for the ordered probit}
For the probit model with independent priors for
thresholds $\mathcal{S}$ and a $N(0,n(X^tX)^{-1})$ prior
for the $\beta$s the full conditional distributions
are easy to derive and so we can formulate a Gibbs sampler
with the following steps.
\begin{enumerate}
	\item\label{posp1} Initialize $\beta_{[0]},Z_{[0]},\mathcal{S}_{[0]}$ 
	where in an abuse of notation, the bracketed subscript
	refers to the iteration of the Markov chain Monte Carlo 
	algorithm.
	\item\label{posp2} Sample $ \beta_{[l]}|Z_{[l-1]},Y,\mathcal{S}_{[l-1]}\sim N(\frac{n}{n+1}(X^tX)^{-1}X^tZ_{[l-1]},\frac{n}{n+1}(X^tX)^{-1})$.
	\item\label{posp3} For each unit $i$ sample $Z_{[l]i}|\beta_{[l]},Y,\mathcal{S}_{[l-1]}\sim N(X_i\beta_{[l]},1)\delta_{s_{[l-1]Y_i-1},s_{[l-1]Y_i}}(Z_{[l]i})$.
	\item\label{posp4} For each cutoff point $j$ sample $s_{[l]j}|Y\sim p(s_{[l]})\delta_{\max\{Z_{[l]i}|Y_i=j\},\min\{Z_{[l]i}|Y_i=j+1\}}(s_{[l]j})$
	\item\label{posp5} Sample $Y^{\mis}|Y,\beta_{[l]},\mathcal{S}_{[l]}$ from
	the posterior predictive distribution 
	and construct the causal estimand of interest $T_{[l]}$. 
\end{enumerate}
Above, the function $\delta_{a,b}(c)$ is equal to one if $a\leq c\leq b$. Steps \ref{posp2} through \ref{posp5} are iterated until the joint distribution
of $(\beta,Z,\mathcal{S})$ reaches stationarity. The  
distribution of the $T_{[l]}$ for $l\in\{1,\cdots,L\}$ provides an
approximation for the posterior predictive distribution of the
causal estimand of interest on the observed scale. A histogram summary
of the posterior predictive distribution provides both a 
measurement of the most likely value of the estimand and a measure
of the certainty about this value.

If we choose to avoid the prior specification for $\mathcal{S}$,
we can employ the rank likelihood, first discussed by \cite{pettitt1982inference} and employed in the ordered
probit setting by \cite{hoff2008rank}.
In the rank likelihood case, we do not need to estimate the cutoff
values in $\mathcal{S}$. Instead we require that
the latent outcomes $Z$s must lie in the set
$R(Y)=\{Z\in\mathbb{R}^n:Z_{i_1}<Z_{i_2}\text{ if }Y_{i_1}<Y_{i_2}\}$.
 Posterior inference with this assumption forgoes
 step \ref{posp4} of the procedure above. The full conditional for the 
 parameters $\beta$ remains the same and so only step \ref{posp3}
 must be changed to reflect the rank likelihood as the 
 full conditional distribution of $Z_{[l]j}$ now depends
 on the remaining $Z_{[l-1]j^\prime}$. That is
 the sampling distribution in step \ref{posp3} becomes
\[p(Z_{[l]j}|\beta_{[l]},Z_{[l]}\in R(Y),Z_{[l-1] -i})\propto N(X_i\beta_{[l]},1)\delta_{\max\{Z_{[l]i}:Y_i<Y_j\},\min\{Z_{[l]i}:Y_j<Y_i\}}(Z_{[l]j}).
\] 

\paragraph{Complications.}
In applications it may be desirable to assume that $g_0\neq g_1$ and that $\mathcal{S}^{0}\neq\mathcal{S}^1$.
Both of the  above approaches above can be employed
 to perform inference under this more complex model. Assuming that the 
 only difference between the two functions $g_t$ is in the 
 cutoff values (that is, the mean function $f$ remains the same, with or without the additive treatment effect) the MCMC
 procedure described above does changed substantially. Specifically, step
 \ref{posp2} remains the same where $Z_{[l-1]}$ includes all of the units. 
 The main changes appear in steps \ref{posp3} and \ref{posp5} 
 (and step \ref{posp4} if it is needed): each of these steps is split into
 an update for the control and an update for the treated groups
 since these groups now have their own parameters. 
 Further variations on the model can also be introduced as long as
 all of the requisite probability distributions can either be
 computed or sampled from.

\section{Analysis of educational outcomes in the General Social Survey}\label{dataexamp}
In this section we consider data from the 1994 General Social Survey
on the educational outcomes of a sample individuals living in the 
United States \citep{davis1991general}. Each of the 835 male 
respondents who were between the ages
of 25 and 60 and in the workforce during 
the survey provided information on their educational outcomes as 
well as information about whether at least one of their
parents had attained a college degree or higher. 
The possible levels of education recorded 
for an individual were ``less than high school'', ``high school'', ``associates'', ``bachelor'' and ``graduate''. 
These data are presented
in Table \ref{tab:dat} and Figure \ref{fig:margins} and were previously studied in 
\cite{hoff2007extending,hoff2009first}. 

\begin{table}
\caption{The distributions of individual educational outcomes
according to treatment status.\label{tab:dat}}
\centering
\begin{tabular}{rrrrrr}
  \hline
 &  ``less than high school''& ``high school''&``associates''& ``bachelor'' & ``graduate'' \\
  \hline
Control &  79 & 378 &  52 & 112 &  49 \\
  Treatment &   2 &  46 &  11 &  65 &  41 \\
   \hline
\end{tabular}
\end{table}
Based on the marginal distributions alone one might suspect
that a positive treatment effect exists as over
$50\%$ of individuals in the treated group achieved an
educational level of college or above, while only
$24\%$ of the control group have this educational level.
More generally, the marginal distribution of the
treatment group stochastically dominates the marginal distribution
of the control group. However, this information does not
demonstrate the magnitude of the effect. As such
we consider the conditional estimands that we previously described.
Here we are interested in the conditional 
median potential outcome
under treatment given a particular level of the potential
outcome under control. This estimand can capture
different magnitudes of the effect for individuals whose 
potential outcomes differ under control. For example, 
we might expect that individuals whose potential outcome
under control is college or better to have a lower effect of
treatment than for individuals whose potential outcome under
control is less than college.

The following specifies a model for the potential
outcomes using a latent variable representation with
correlated potential outcomes:
\begin{align*}
Y_i(T)&=g(Z_i(T))\qquad &\epsilon_i(T)\sim {\rm normal}(0,1)\\
Z_i(T)&=\beta T+\epsilon_i(T)\qquad &{\rm cor}(\epsilon_i(0),\epsilon_i(1))=\rho
\end{align*}
The function $g$ maps the latent variable representation to
the ordered non-numeric space and the parameter $\rho$
captures the dependence among the potential outcomes for
an individual. It is clear that the data contains no 
information about the parameter $\rho$ since we only observe
one of the potential outcomes for each individual \citep{imbens2014causal}. 
When $\rho=1$ then the linear
relationship between the potential outcomes on the latent
scale is exact. On the other hand $\rho=0$ suggests that all
of the effect is captured by the coefficient $\beta$ on the 
latent scale. Other choices of $\rho\in(0,1)$ describe
different measures of positive dependence. As in \cite{dasgupta2014causal},
we treat the correlation parameter $\rho$ as known. We explore
four values of $\rho$: $0.25$, $0.50$, $0.783$ and $1$.
The third value of $\rho$ is chosen via the following heuristic
argument: It is the Frechet-Hoeffding upper bound for the correlation
of two random variables with marginals given by the observed
potential outcomes. Since the correlation of two coarsened
random variables is necessarily not bigger than the correlation
between the two uncoarsened versions, the choice of the 
upper bound in our heuristic reflects that the correlation
between the latent potential outcomes is greater than the correlation 
between the observed scale potential outcomes.

We use the Bayesian approach that employs the rank likelihood
described in the previous section to obtain the 
posterior predictive estimates of 
the estimands of interest. We draw 50,000 posterior
predictive samples for each estimand of interest and report
the posterior median in Table \ref{tab:res}. Intervals are provided where
the posterior is not a point. 

\begin{table}
\caption{Posterior median estimates of the estimands of interest for the
General Social Survey where
$1=$``<high school'', $2=$``high school'', $3=$``associates'', $4=$``bachelor'' and $5=$``graduate''.
$95\%$ confidence intervals are reported as follows: $^\star=(4,5)$, $^\triangle=(2,4)$, $^\square=(2,3)$, $^\circ=(3,4)$, otherwise
the interval is a point.\label{tab:res}}
\centering
	{\begin{tabular}{c|p{2cm}p{2cm}p{2cm}p{2cm}p{2cm}}
		&\multicolumn{5}{c}{$\median~[~Y(1) \mid Y(0)=j~]$}\\
		$\rho$ & $j=1$ & $j=2$ & $j=3$ & $j=4$ & $j=5$ \\ 
		\hline
		$0.25$    & $4^\triangle$ & 4 & $4^\star$ & $4^\star$ & $5^\star$ \\ 
		$0.50$    & $2^\square$ & 4 & $4^\star$ & $5^\star$ & 5\\
		$0.783$ & 2 & 4 & $5^\star$& 5 & 5 \\ 
		$1$     & 2 & $4^\circ$& 5 & 5 & 5 \\ 
		\hline
		\end{tabular}}
\end{table}
The results presented in Table \ref{tab:res} 
reveal that the analysis is only mildly sensitive to the 
choice of $\rho$. The conditional estimates that
change the most with $\rho$ correspond to 
categories for which there is not much data information
(``$<$high school'' and ``associates''). For example,
when $\rho=0.25$ we estimate $\median[Y(1)|Y(0)=\text{``<high school''}]$
to be ``bachelor'', but the $95\%$ posterior interval
includes the estimate based on all other $\rho$ values, ``high school''.
The first order conclusion is that in fact there is a treatment
effect, which agrees with previous insights into the subject
of the effect of parental education on child educational
outcomes \citep{burnhill1990parental}. The contribution of our
method provides a breakdown of this effect conditional on
the potential outcome under control. In particular, we see
that for all individuals who would have at least attained
a high school diploma under control, 
the effect of a parental college degree is that they themselves
attain at least a college degree. 
\section{Concluding remarks}
In this article we described technical difficulties that arise in  causal analyses when the potential outcomes
take ordinal non-numeric values, and proposed solutions. 

In Section \ref{sec:estimand}, 
we proposed a class of 
multidimensional estimands that depend on the
distribution of the potential outcomes under treatment
conditional on those under control. These estimands,
of the form $\median[Y(1)|Y(0)=j]$ are especially useful
 when experiments are conducted with the goal of planning future interventions. For example, in a health experiment where outcomes attempt to measure happiness and can take levels of ``sad'' and ``happy'',
a practitioner can decide to assign an anti-depessant to individuals 
only
if $\median[Y(1)|Y(0)=\text{``sad''}]=\text{``happy''}$. One can imaging such a prescription would be made conditional on covariates.
Additionally,
we described why classical one dimensional estimands,
such as the average treatment effect and the difference
in medians between treatment and control,
are inappropriate for ordinal non-numeric data. One
dimensional estimands that are appropriate must be 
scale-free. When one dimensional estimands are of interest, we recommend using a measure of distance between
the two marginal distributions of the potential outcomes
when testing the sharp null hypothesis of no effect in 
Section \ref{sec:noeff}. 
We also introduced a more general testing
framework that relaxes the sharp null 
of constant non-zero effect to accommodate 
scale-free data, in Section \ref{sec:const}. 

We also discussed an additional class of estimands based on potential outcomes on a latent scale, in Section \ref{sec:laten}.
Estimands defined on the latent scale, generally suffer from non-identifiability issues due to the mapping from
the observed to the latent scales. As such, we caution practitioners when using them in applications.
Nonetheless, a latent variable formulation does allow us
to develop a Bayesian estimation framework for providing
posterior predictive estimates of the  induced  estimands on the observed scale.
We demonstrate this procedure using an ordered probit 
as well as a rank likelihood, in Section \ref{sec:bayes}, and illustrate the proposed methods with an application to educational outcomes in the General Social Survey, in Section \ref{dataexamp}.





\bibliographystyle{apalike}
\bibliography{biblio}
\end{document}